\def\csname ver@l3regex.sty\endcsname{} 
\pgfplotsset{compat=newest}
\newcommand{\acro}{{{\sf\it GAROTA}}\xspace}
\newcommand{\acrolong}{{{\underline{G}eneralized \underline{A}ctive \underline{R}oot-\underline{O}f-\underline{T}rust} \underline{A}rchitecture}\xspace}
\newcommand{\dev}{{\ensuremath{\sf{\mathcal Prv}}}\xspace}
\newcommand{\func}{{\ensuremath{\sf{\mathcal F}}}\xspace}
\newcommand{\trigger}{{\ensuremath{\sf{trigger}}}\xspace}
\renewcommand\adv{\ensuremath{\sf{\mathcal Adv}}\xspace}
\mathchardef\mhyphen="2D
\newcommand{\dmaaddr}{\ensuremath{DMA_{addr}}\xspace}
\newcommand{\dmaen}{\ensuremath{DMA_{en}}\xspace}
\newcommand{\modMem}{\ensuremath{\mathsf{modMem}}}
\renewcommand{\mod}{\ensuremath{\mathsf{mod}}}
\newcommand{\disable}{\ensuremath{\mathsf{disable}}}
\newcommand{\ignore}[1]{}
\newtheorem{definition}{Definition}
\newtheorem{theorem}{Theorem}
\author[]{Esmerald Aliaj}
\author[]{Ivan De Oliveira Nunes}
\author[]{Gene Tsudik}
\affil[]{University of California, Irvine}
\definecolor{listinggray}{gray}{0.9}
\definecolor{lbcolor}{rgb}{0.9,0.9,0.9}
\definecolor{Darkgreen}{rgb}{0,0.4,0}
\begin{document}

\title{GAROTA: \acrolong}

\maketitle

\begin{abstract}
Embedded (aka smart or IoT) devices are increasingly popular and becoming ubiquitous. 
Unsurprisingly, they are also attractive attack targets for exploits and malware. Low-end embedded 
devices, designed with strict cost, size, and energy limitations, are especially challenging 
to secure, given their lack of resources to implement sophisticated security services, 
available on higher-end computing devices. To this end, several tiny Roots-of-Trust (RoTs) 
were proposed to enable services, such as remote verification of device's software state 
and run-time integrity. Such RoTs operate reactively: they can prove whether a desired action 
(e.g., software update, or execution of a program) was performed on a specific device. 
However, they can not guarantee that a desired action will be performed, 
since malware controlling the device can trivially block access to the RoT by ignoring/discarding 
received commands and other trigger events. This is a major and important problem 
because it allows malware to effectively ``brick'' or incapacitate a potentially huge 
number of (possibly mission-critical) devices.

Though recent work made progress in terms of incorporating more active behavior atop 
existing RoTs, it relies on extensive hardware support --Trusted Execution Environments 
(TEEs) which are too costly for low-end devices. In this paper, we set out to systematically 
design a minimal active RoT for tiny low-end MCU-s. We begin with the following questions: 
(1) What functions and hardware support are required to guarantee actions in the 
presence of malware?, (2) How to implement this efficiently?, and (3) What security benefits 
stem from such an active RoT architecture? We then design, implement, formally verify, 
and evaluate \acro~: \acrolong. We believe that \acro is the 
first clean-slate design of an active RoT for low-end MCU-s. We show how \acro
guarantees that even a fully software-compromised low-end MCU performs a desired action. 
We demonstrate its practicality by implementing \acro in the context of three types of 
applications where actions are triggered by: sensing hardware, network events and timers. 
We also formally specify and verify \acro functionality and properties.
\ignore{
Embedded (aka smart or IoT) devices are increasingly popular and becoming ubiquitous. Unsurprisingly,
they are also attractive attack targets for exploits and malware. Low-end embedded devices, designed with strict 
cost, size, and energy limitations, are especially challenging to secure, given their lack of resources to implement 
sophisticated security services, such as those available on laptops, smartphones, and other higher-end devices. 
To tackle this problem, several architectures proposed ``tiny'' Roots-of-Trust (RoTs) to enable services, such as 
remote verification of device's software state (e.g., remote attestation) and run-time integrity (e.g., control-flow and
data-flow attestation). 
All such RoTs {\em operate reactively}: they can prove whether a desired action (e.g., software update, or 
execution of a program) was performed on a specific device. However, they can not guarantee that a desired action
{\em will be performed}, since malware controlling the device can trivially block access to the RoT by ignoring/discarding 
received commands. This is a {\bf major and important problem} because it allows malware to effectively ``brick'' or 
incapacitate a potentially huge number of (possibly mission-critical) devices.

Though recent work made progress in terms of incorporating more active behavior atop existing RoTs, it relies 
on extensive hardware support -- Trusted Execution Environments (TEEs) which are too costly for low-end devices.
In this paper, we set out to systematically design a minimal active RoT for tiny low-end MCU-s. 
We begin with the following questions: (1) \textit{What functions and hardware support are 
required to guarantee actions in the presence of malware?}, (2) \textit{How to implement this efficiently?},
and (3) \textit{What security benefits stem from such an active RoT architecture?}
We then design, implement, formally verify, and evaluate \acro~: \acrolong.
We believe that \acro is the first clean-slate design of a active RoT applicable to low-end MCU-s, 
We show how \acro guarantees that even a fully software-compromised low-end MCU performs a desired action.
We demonstrate its practicality by implementing \acro in the context of three types of applications where actions
are triggered by: sensing hardware, network events and timers. We also formally specify and verify \acro 
functionality and properties.
}
\end{abstract}

\section{Introduction}\label{sec:intro}
The importance of embedded systems is hard to overestimate and their use in critical settings is projected to rise sharply~\cite{projection}. 
Such systems are increasingly inter-dependent and used in many settings, including household, office, factory, automotive, 
health and safety, as well as national defense and space exploration. Embedded devices are usually deployed in large quantities
and for specific purposes. Due to cost, size and energy constraints, they typically cannot host complex security mechanisms. 
Thus, they are an easy and natural target for attackers that want to quickly and efficiently cause harm on an organizational, 
regional, national or even global, level. 
Fundamental trade-offs between security and other priorities, such as cost or performance are a recurring theme in the domain 
of embedded devices. Resolving these trade-offs, is challenging and very important.

Numerous architectures focused on securing low-end micro-controller units (MCU-s) by designing small and affordable trust
anchors~\cite{abera2016things}. However, most such techniques {\bf operate passively}. They can prove, to a trusted party, 
that certain property (or action) is satisfied (or was performed) by a remote and potentially compromised low-end MCU. 
Examples of such services include remote attestation~\cite{smart,sancus,vrased,simple,tytan,trustlite}, proofs of remote 
software execution~\cite{apex}, control-flow \& data-flow attestation~\cite{litehax,cflat,lofat,atrium,oat,tinycfa}, as well as 
proofs of remote software update, memory erasure, and system-wide reset\cite{pure,verify_and_revive,asokan2018assured}. 
These architectures  are typically designed to provide proofs that are unforgeable, despite potential compromise of the MCU.

Aforementioned approaches are passive in nature. While they can detect integrity violations of remote devices, they cannot 
guarantee that a given security or safety-critical task will be performed. For example, consider a network comprised of 
a large number (of several types of) simple IoT devices, e.g., an industrial control system. 
Upon detecting a large-scale compromise, a trusted remote controller wants to fix the situation by requiring 
all compromised devices to reset, erase, or update themselves in order to expunge malware. Even if each device has an 
uncompromised, yet passive, RoT, malware (which is in full control of the device's software state) can easily
intercept, ignore, or discard any requests for the RoT, thus preventing its functionality from being triggered. Therefore, the only
way to repair these compromised devices requires direct physical access (i.e, reprogramming by a human) to each device. Beyond
the DoS damage caused by the multitude of essentially ``bricked'' devices, physical reprogramming itself is slow and disruptive, i.e., 
a logistical nightmare.

Motivated by the above, recent research~\cite{proactive1,proactive2} yielded trust anchors that exhibit a more active behavior.
Xu et al.~\cite{proactive1} propose the concept of Authenticated Watch-Dog Timers (WDT), which can enforce periodic execution
of a secure component (an RoT task), unless explicit authorization (which can itself include a set of tasks) is received from a trusted controller.
In~\cite{proactive2} this concept is realized with the reliance on an existing passive RoT (ARM TrustZone), as opposed to a dedicated 
co-processor as in the original approach from~\cite{proactive1}. Both techniques are time-based, periodically and actively triggering 
RoT invocation, despite potential compromise of the host device. (We discuss them in more detail in Section~\ref{sec:app_authWDT}.)

In this paper, we take the next step and design a more general \underline{active} RoT, called \acro: \acrolong.
Our goal is an architecture capable of triggering guaranteed execution of trusted and safety-critical tasks based on
arbitrary events captured by hardware peripherals (e.g., timers, GPIO ports, and network interfaces) of an MCU the 
software state of which may be currently compromised. In principle, any hardware event that causes an interruption 
on the unmodified MCU can trigger guaranteed execution of trusted software in \acro (assuming proper configuration).
In that vein, our work can be viewed as a generalization of concepts proposed in~\cite{proactive1,proactive2}, enabling 
arbitrary events (interruption signals, as opposed to the timer-based approach from prior work) to trigger guaranteed 
execution of trusted functionalities. In comparison, prior work has the advantage of relying on pre-existent hardware, thus 
not requiring any hardware changes. On the other hand, our clean-slate approach, based on a minimal hardware design, 
enables new applications and is applicable to lower-end resource-constrained MCU-s.

At a high level, \acro is based on two main notions: ``{\bf Guaranteed Triggering}'' and ``{\bf Re-Triggering on Failure}''.
The term \trigger is used to refer to an event that causes \acro RoT to take over the execution in the MCU. 
The ``guaranteed triggering'' property ensures that a particular event of interest always triggers execution of \acro RoT.
Whereas,``re-triggering on failure'' assures that, if RoT execution is illegally interrupted for any reason (e.g., attempts to 
violate execution's integrity, power faults, or resets), the MCU resets and the RoT is guaranteed to be the first to execute after subsequent re-initialization.
Figure~\ref{fig:overview} illustrates this workflow.

\begin{figure}
\centering
\includegraphics[width=0.9\columnwidth]{./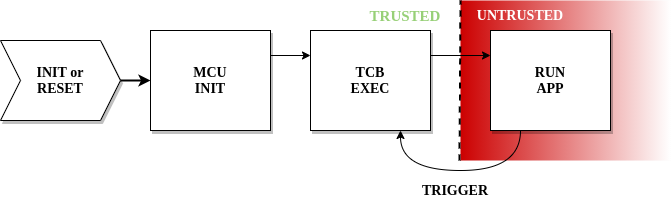}
\vspace{-1mm}
\caption{\acro Software Execution Flow}\label{fig:overview}
\end{figure}

We use \acro to address three realistic and compelling use-cases for the active RoT:
\begin{compactitem}
\item GPIO-TCB: A safety-critical sensor/actuator hybrid, which is guaranteed to sound an alarm 
if the sensed quantity (e.g., temperature, CO$_2$ level, etc) exceeds a certain threshold. 
This use-case exemplifies hardware-based triggering.
\item TimerTCB: A real-time system where a predefined safety-critical task is guaranteed to execute periodically. 
This use-case exemplifies timer-based triggering, which is also attainable by~\cite{proactive1,proactive2}.
\item NetTCB: a trusted component that is always guaranteed to process commands received over the network, 
thus preventing malware in the MCU from intercepting and/or discarding commands destined for the RoT. 
This use-case exemplifies network-based triggering.
\end{compactitem}
In all three cases, the guarantees hold despite potential full compromise of the MCU software state, 
as long as the RoT itself is trusted.

In addition to designing and instantiating \acro with three use-cases, we formally specify \acro goals and requirements
using Linear Temporal Logic (LTL). These formal specifications offer precise definitions for the security offered by \acro 
and its corresponding assumptions expected from the underlying MCU, i.e., its machine model.  This can serve as an 
unambiguous reference for future implementations and for other derived services. Finally, we use formal verification 
to prove that the implementation of \acro hardware modules adheres to a set of sub-properties (also specified in LTL) 
that -- when composed with the MCU machine model -- are sufficient to achieve \acro end-to-end goals.
In doing so, we follow a similar verification approach that has been successfully applied in the context
of \underline{passive} RoT-s~\cite{vrased,apex,rata}.

We implement and evaluate \acro and make its verified implementation (atop the popular low-end MCU TI MSP430)
along with respective computer proofs/formal verification publicly available in~\cite{repo}.

\section{Scope}\label{sec:scope}
This work focuses on low-end embedded MCU-s and on design with minimal hardware requirements.
A minimal design simplifies reasoning about \acro and formally verifying its security properties. 
In terms of practicality and applicability, we believe that an architecture that is cost-effective enough for 
the lowest-end MCU-s can also be adapted (and potentially enriched) for implementations on higher-end devices
with higher hardware budgets, while the other direction is usually more challenging.
Thus, our design is applicable to the smallest and weakest devices based on low-power single-core platform
with only a few KBytes of program and data memory (such as the aforementioned Atmel AVR ATmega and TI MSP430), 
with $8$- and $16$-bit CPUs, typically running at $1$-$16$ MHz clock frequencies, with $\approx64$ KBytes 
of addressable memory. SRAM is used as data memory ranging in size between $4$ and $16$ KBytes, 
while the rest of address space is available for program memory. Such devices usually run software atop 
``bare metal'', execute instructions in place  (physically from program memory), and have no memory management 
unit (MMU) to support virtual memory.

Our initial choice of implementing \acro atop the well-known TI MSP430 low-energy MCU is motivated by availability of a 
well-maintained open-source MSP430 hardware design from OpenCores \cite{openmsp430}. Nevertheless, our design 
and its machine model are applicable to other low-end MCU-s of the same class.

\section{\acro Overview}\label{sec:overview}
\begin{figure}
\centering
\includegraphics[width=0.9\columnwidth]{./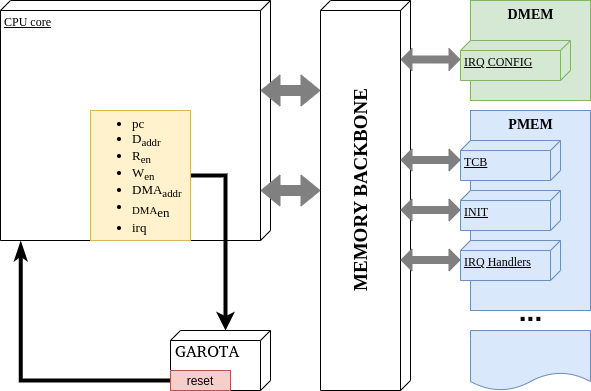}
\vspace{-1mm}
\caption{\acro in the MCU architecture}\label{fig:architecture}
\end{figure}
The goal of \acro is to guarantee eventual execution of a pre-defined functionality \func implemented as a trusted software 
executable. We refer to this executable as \acro trusted computing base (TCB).
\acro is agnostic to the particular functionality implemented by \func, which allows guaranteed execution of 
arbitrary tasks, to be determined based on the application domain; see Section~\ref{sec:applications} for examples.

A \trigger refers to a particular event that can be configured to cause the TCB to execute.
Examples of possible triggers include hardware events from:
\begin{compactitem} \small\it
\item External (usually analog) inputs, e.g., detection of a button press, motion, sound or certain temperature/CO$_2$ threshold.
\item Expiring timers, i.e., a periodic \trigger.
\item Arrival of a packet from the network, e.g., carrying a request to collect sensed data, perform sensing/actuation, or initiate 
a security task, such as code update or remote attestation.
\end{compactitem} \normalsize\rm
If configured correctly, these events cause interruptions, which are used by \acro to guarantee execution of \func.
Since \trigger and TCB implementation are configurable, we assume that these initial configurations are done
securely, at or before initial deployment. \trigger configuration will include the types of interruptions and respective settings 
e.g., which GPIO port, what type of event, its time granularity, etc. At runtime, \acro protects the initial configuration 
from illegal modifications, i.e., ensures correct \trigger behavior. This protection includes preserving interrupt configuration 
registers, interrupt handlers, and interrupt vectors. This way \acro guarantees that \trigger always results in an invocation
of the TCB.

However, guaranteed invocation of the TCB upon occurrence of a \trigger is not sufficient to claim that \func is properly 
performed, since the TCB code (and execution thereof) could itself be tampered with. To this end, \acro provides runtime 
protections that prevent any unprivileged/untrusted program from modifying the TCB code, i.e., the program memory region 
reserved for storing that code. (Recall that instructions execute in place, from program memory). \acro also monitors the 
execution of the TCB code to ensure that:
\begin{compactenum}
\item {\bf Atomicity:} Execution is atomic (i.e., uninterrupted), from the TCB's first instruction (legal entry), to its last instruction (legal exit);
\item {\bf Non-malleability:} During execution, $DMEM$ cannot be modified, other than by the TCB code itself, e.g., no modifications by
other software or DMA controllers.
\end{compactenum}
These two properties ensure that any potential malware residing on the MCU (i.e., compromised software outside TCB or compromised 
DMA controllers) cannot tamper with TCB execution. 

\acro monitors TCB execution and, if a violation of {\em any} property
(not just atomicity and non-malleability)
occurs, it triggers an immediate MCU reset to a default trusted state where TCB code is the first component to execute.
Therefore, any attempt to interfere with the TCB functionality or execution only causes the TCB to recover and re-execute, 
this time with the guarantee that unprivileged/untrusted applications cannot interfere.

Both \trigger configurations and the TCB implementation are updatable at run-time, as long as the updates are performed 
from within the TCB itself. While this feature is not strictly required for security, we believe it provides flexibility/updatability, 
while ensuring that untrusted software is still unable to modify \acro trusted components and configuration thereof.
In Section~\ref{sec:confidentiality}, we also discuss how \acro can enforce TCB confidentiality, 
which is applicable to cases where \func implements cryptographic or privacy sensitive tasks.


Each sub-property in \acro is implemented, and individually optimized, as a separate \acro sub-module.
These sub-modules are then composed and shown secure (when applied to the MCU machine model) using a 
combination of model-checking-based formal verification and an LTL computer-checked proof.
\acro modular design enables verifiability and minimality, resulting in low hardware overhead and significantly 
higher confidence about the security provided by its design and implementation.

As shown in Figure~\ref{fig:architecture}, \acro is implemented as a hardware component that monitors a 
set o CPU signals to detect violations to 
required security properties. As such it does not interfere with the CPU core implementation, e.g., 
by modifying its behavior or instruction set. In subsequent sections we describe these properties in 
more detail and discuss their implementation and verification. Finally, we use a commodity FPGA to 
implement \acro atop the low-end MCU MSP430 and report on its overhead.

\section{\acro in Detail}\label{sec:core}
We now get into the details of \acro. The next section provides some background on LTL
and formal verification; given some familiarity with these notions, it can be skipped without loss of continuity.

\subsection{LTL \& Verification Approach}\label{sec:verif_background}
Formal Verification refers to the computer-aided process of proving that a system (e.g., hardware, software, or protocol) 
adheres to its well-specified goals. Thus, it assures that the system does not exhibit any unintended behavior, 
especially, in corner cases (rarely encountered conditions and/or execution paths) that humans tend to overlook. 

To verify \acro, we use a combination of Model Checking  and Theorem Proving, summarized next.
In Model Checking, designs are specified in a formal computation model (e.g., as Finite State Machines or FSMs) 
and verified to adhere to formal logic specifications. The proof is performed through automated and exhaustive 
enumeration of all possible system states. If the desired specification is found not to hold for specific states 
(or transitions among them), a trace of the model that leads to the erroneous state is provided, and the 
implementation can then be fixed accordingly. As a consequence of exhaustive enumeration, proofs for complex 
systems that involve complex properties often do not scale well due to so-called ``state explosion''.

To cope with that problem, our verification approach (in line with prior work~\cite{vrased,apex}) is to specify 
each sub-property in \acro using Linear Temporal Logic (LTL) and verify each respective sub-module for compliance.
In this process, our verification pipeline automatically converts digital hardware, described at Register Transfer Level 
(RTL) using Verilog, to Symbolic Model Verifier (SMV)~\cite{smv} FSMs 
using Verilog2SMV~\cite{irfan2016verilog2smv}.  The SMV representation is then fed to the well-known 
NuSMV~\cite{nusmv} model-checker for verification against the 
specified LTL sub-properties. Finally, the composition of the LTL sub-properties (verified in the model-checking phase) 
is proven to achieve \acro end-to-end goals using an LTL theorem prover~\cite{spot}. Our verification strategy is 
depicted in Figure~\ref{fig:verif_strategy}.

\begin{figure}
\centering
\includegraphics[height=0.55\columnwidth,width=0.8\columnwidth]{./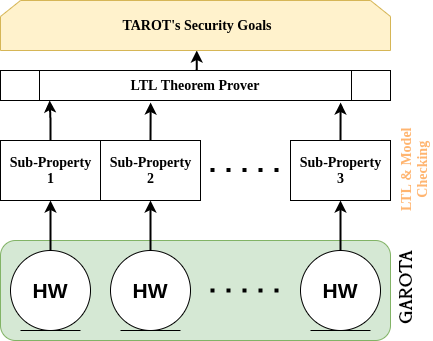}
\vspace{-1mm}
\caption{\acro verification strategy}\label{fig:verif_strategy}
\end{figure}

Regular propositional logic includes propositional connectives, such as: conjunction $\land$, disjunction $\lor$, 
negation $\neg$, and implication $\rightarrow$.  LTL augments it with temporal quantifiers, thus enabling 
sequential reasoning. In this paper, we are interested in the following temporal quantifiers:
\begin{compactitem}
\item \textbf{X}$\phi$ -- ne\underline{X}t $\phi$: holds if $\phi$ is true at the next system state.
\item \textbf{F}$\phi$ -- \underline{F}uture $\phi$: holds if there exists a future state where $\phi$ is true.
\item \textbf{G}$\phi$ -- \underline{G}lobally $\phi$: holds if for all future states $\phi$ is true.
\item $\phi$ \textbf{U} $\psi$ -- $\phi$ \underline{U}ntil $\psi$: holds if there is a future state where $\psi$ holds and
$\phi$ holds for all states prior to that.
\item $\phi$ \textbf{W} $\psi$ -- $\phi$ \underline{W}eak until $\psi$: holds if, assuming a future state where $\psi$ 
holds, $\phi$ holds for all states prior to that. If $\psi$ never becomes true, $\phi$ must hold forever. 
Or, more formally: $\phi \textbf{W} \psi \equiv (\phi \textbf{U} \psi) \lor \textbf{G}(\phi)$
\end{compactitem}
Note that, since \acro TCB is programmable and its code depends on the exact functionality \func for each application domain, 
verification and correctness of any specific TCB code is not within our goals. We assume that the user is responsible for 
assuring correctness of the trusted code to be loaded atop \acro active RoT. 
This assumption is consistent with other programmable (though passive) RoTs, including those targeting higher-end 
devices, such as Intel SGX~\cite{SGX}, and ARM TrustZone\cite{trustzone}. In many cases, we expect the TCB code 
to be minimal (see examples in Section~\ref{sec:applications}), and thus unlikely to have bugs.
\subsection{Notation, Machine Model, \& Assumptions}\label{sec:adv}
\label{sec:MCU_assumptions}
This section discusses our machine and adversarial models. 
We start by overviewing them informally in Sections~\ref{sec:cpu_signals}, \ref{sec:mem_model} and~\ref{sec:trigger_model}). 
Then, Section~\ref{sec:machine_model}, formalizes the machine model using LTL.
For quick-reference, Table~\ref{tab:notation} summarizes the notation used in the rest of the paper.

\begin{table}
\tt \scriptsize
\begin{center}
\begin{tabular}{r p{6.3cm} }
\toprule$PC$&  Current Program Counter value\\
$R_{en}$&  Signal that indicates if the MCU is reading from memory (1-bit)\\
$W_{en}$&  Signal that indicates if the MCU is writing to memory (1-bit)\\
$D_{addr}$&  Address for an MCU memory access (read or write) \\
\dmaen&  Signal that indicates if DMA is currently enabled (1-bit)\\
\dmaaddr&  Memory address being accessed by DMA, if any\\
$gie$&  Global Interrupt Enable: signal that indicates whether or not interrupts are globally enabled (1-bit). \\
$irq$&  Signal that indicates if an interrupt is happening \\
$DMEM$ &  Region corresponding to the entire data memory of the MCU: 
$DMEM =[DMEM_{min}, DMEM_{max}]$.\\
$PMEM$& Region corresponding to the entire program memory of the MCU: 
$PMEM = [PMEM_{min}, PMEM_{max}]$.\\
$TCB$&  Memory region reserved for the TCB's executable implementing 
\func: $TCB = [TCB_{min}, TCB_{max}]$.  $TCB \in PMEM$.   \\
$INIT$&  Memory region containing the MCU's default initialization code. 
\func: $INIT = [INIT_{min}, INIT_{max}]$. $INIT \in PMEM$.   \\
$reset$ & A 1-bit signal that reboots/resets the MCU when set to logical $1$ \\
\bottomrule
\end{tabular}
\end{center}
\vspace{-0.5cm}
\caption{Notation Summary}\label{tab:notation}
\vspace{-1em}
\end{table}

\subsubsection{CPU Hardware Signals}\label{sec:cpu_signals}
\acro neither modifies nor verifies the underlying CPU core/instruction set.
It is assumed that the underlying CPU adheres to its specification and \acro is implemented 
as a standalone hardware module that runs in parallel with the CPU, and enforcing necessary 
guarantees in hardware. The following CPU signals are relevant to \acro:

\noindent {\bf H1 -- \emph{Program Counter (PC):}} $PC$ always contains the address of the 
instruction being executed in the current CPU cycle.

\noindent {\bf H2 -- \emph{Memory Address:}} Whenever memory is read or written by the 
CPU, the data-address signal ($D_{addr}$) contains the address of the corresponding memory 
location. For a read access, a data read-enable bit ($R_{en}$) must be set, while, for a write 
access, a data write-enable bit ($W_{en}$) must be set.

\noindent {{\bf H3 -- \emph{DMA:}} Whenever a DMA controller attempts to access the main 
system memory, a DMA-address signal (\dmaaddr) reflects the address of the memory location 
being accessed and a DMA-enable bit (\dmaen) must be set. DMA can not access memory when 
\dmaen is off (logical zero).

\noindent {\bf H4 -- \emph{MCU Reset:}} At the end of a successful reset routine, all registers 
(including $PC$) are set to zero before resuming normal software execution flow. Resets are 
handled by the MCU in hardware. Thus, the reset handling routine can not be modified.
Once execution re-starts, PC is set to point to the first instruction in the boot section of program 
memory, referred to as $INIT$ (see M2 below). When a reset happens, the corresponding $reset$ 
signal is set. The same signal is also set when the MCU initializes for the first time. An MCU $reset$ 
also resets its DMA controller, and any prior configuration thereof. (DMA) behavior is configured 
by user software at runtime. By default (i.e., after a reset) DMA is inactive.

\noindent {\bf H5 -- \emph{Interrupts:}} Whenever an interrupt occurs, the corresponding $irq$ signal 
is set. Interrupts may be globally enabled or disabled in software. The 1-bit signal $gie$ always 
reflects whether or not they are currently enabled. The default $gie$ state (i.e., at boot or after a reset) 
is disabled (logical zero).

\subsubsection{Memory: Layout \& Initial Configuration}\label{sec:mem_model}
As far as MCU initial memory layout and its initial software configuration (set at, or prior to, its deployment),
the following are relevant to \acro:

\noindent {{\bf M1 -- \emph{PMEM:}} Corresponds to the entire $PMEM$ address space. 
Instructions are executed in place. Hence, at runtime, $PC$ points to the $PMEM$ address 
storing the instruction being executed.

\noindent {\bf M2 -- \emph{INIT:}} Section of $PMEM$ containing the MCU boot segment, i.e., the first 
software to be executed whenever the MCU boots or after a $reset$. We assume $INIT$ code is finite.

\noindent {\bf M3 -- \emph{TCB:}} Section of $PMEM$ reserved for \acro trusted code, i.e., \func. 
TCB is located immediately after $INIT$; it is the first software to execute following successful completion 
of $INIT$.

\noindent {\bf M4 -- \emph{IRQ-Table and Handlers:}} IRQ-Table is located in $PMEM$ and contains 
pointers to the addresses of so-called \emph{interrupt handlers}. When an interrupt occurs, the MCU
hardware causes a jump to the corresponding handler routine. The address of this routine is specified 
by the IRQ-Table fixed index corresponding to that particular interrupt. Handler routines are code 
segments (functions) also stored in $PMEM$.

\noindent {\bf M5 -- \emph{$IRQ_{cfg}$:}} Set of registers in $DMEM$ used to configure specific 
behavior of individual interrupts at runtime, e.g., deadline of a timer-based interrupt, or type of event on a 
hardware-based interrupt.

Note that the initial memory configuration can be changed at run-time (e.g., by malware that infects the 
device,as discussed in Section~\ref{sec:adv_model}), unless it is explicitly protected by \acro 
verified hardware modules.

\subsubsection{Initial Trigger Configuration}\label{sec:trigger_model}
\noindent {\bf T1 -- \trigger:} \acro\ \trigger is configured, at MCU (pre)deployment-time, by setting the 
corresponding entry in IRQ-Table and respective handler to jump to the first instruction in TCB ($TCB_{min}$) 
and by configuring the registers in $IRQ_{cfg}$ with desired interrupt parameters, reflecting the desired 
\trigger behavior; see Section~\ref{sec:applications} for examples. Thus, a \trigger event causes the TCB code to 
execute, as long as the initial configuration is maintained.

Our initial configuration is not much different from a regular interrupt configuration in a typical embedded 
system program. It must correctly point to \acro TCB legal entry point, just as regular interrupts must 
correctly point to their respective handler entry points. For example, to initially configure a timer-based trigger, 
the address in IRQ-Table corresponding to the respective hardware timer is set to point to $TCB_{min}$ 
and the correspondent registers in $IRQ_{cfg}$ are set to define the desired interrupt period.

\subsubsection{Adversarial Model}\label{sec:adv_model}
We consider an adversary \adv that controls \dev's entire software state, including code, and data.
\adv\ can read/write from/to any memory that is not explicitly protected by hardware-enforced access 
control rules. \adv\ might also have full control over all Direct Memory Access (DMA) controllers of \dev. 
Recall that DMA allows a hardware controller to directly access main memory ($PMEM$ or $DMEM$) 
without going through the CPU.

\textbf{Physical Attacks:} physical and hardware-focused attacks are out of scope of \acro. Specifically, we assume that 
\adv\ can not modify induce hardware faults, or interfere with \acro via physical presence attacks and/or side-channels. 
Protection against such attacks is an orthogonal issue, which can be addressed via physical security techniques~\cite{ravi2004tamper}.

\textbf{Network DoS Attacks:} we also consider out-of-scope all network DoS attacks whereby \adv drops traffic to/from \dev, or floods \dev
with traffic, or simply jams communication. Note that this assumption is relevant only to network-triggered events, exemplified by the NetTCB 
instantiation of \acro, described in Section \ref{sec:app_nettcb}. 

\textbf{Correctness of TCB's Executable:} we stress that the purpose of \acro is guaranteed execution of \func, 
as specified by the application developer and loaded onto \acro TCB at deployment time. Similar to existing 
RoTs (e.g., TEE-s in higher-end CPUs) \acro does \textbf{not} check correctness of, 
and absence of implementation bugs in, \func's implementation. In many applications, \func code
is minimal; see examples in Section~\ref{sec:applications}. Moreover, correctness of \func need \textbf{not} be assured. 
Since embedded applications are originally developed on more powerful devices (e.g., general-purpose computers), 
various vulnerability detection methods, e.g., fuzzing~\cite{fuzzer}, static analysis~\cite{costin2014large}, or formal 
verification, can be employed to avoid or detect implementation bugs in \func.
All that can be performed off-line before loading \func onto \acro TCB and the entire issue is orthogonal to \acro functionality.

\subsubsection{Machine Model (Formally)}\label{sec:machine_model}
\begin{figure*}[!ht] 
\begin{mdframed}
\linespread{0.5}
\footnotesize
\begin{definition}\label{def:machine_model}{\underline{Machine Model:}}\\

{\bf Memory {Modifications: } } 

\begin{align}
\begin{split}\label{ltl:mem_write}
{\bf G:}\{\modMem(X) \rightarrow (W_{en} \land D_{addr} \in X) \lor (DMA_{en} \land DMA_{addr} \in X)\}
\end{split}
\end{align}
{\bf Successful Trigger Modification:}

\begin{align}\label{ltl:trigger_mod}
\begin{split}
\hspace*{0.3cm}{\bf G:}\{\mod(\trigger_{cfg}) \rightarrow [(\modMem(PMEM) \lor \modMem(IRQ_{cfg})) \land \neg reset]\}
\end{split}
\end{align}
{\bf Successful Interrupt Disablement:}

\begin{align}
\begin{split}\label{ltl:gie}
\textbf{G:}\{\disable(irq) \rightarrow [\neg reset \land gie \land \neg {\bf X}(gie) \land \neg {\bf X}(reset)]\}
\end{split}
\end{align}
{\bf Trigger/TCB Initialization (\ref{ltl:no_mod_no_prob} \& \ref{ltl:no_mod_no_prob_reset}):}
\begin{align}
\begin{split}\label{ltl:no_mod_no_prob}
\textbf{G:} \{\neg \mod(\trigger_{cfg}) \lor PC \in TCB\} \land \textbf{G:} \{\neg \disable(irq) \lor {\bf X}(PC) \in TCB\} \rightarrow \textbf{G:}\{\trigger \rightarrow F(PC=TCB_{min})\}
\end{split}
\end{align}

\begin{align}
\begin{split}\label{ltl:no_mod_no_prob_reset}
\textbf{G:} \{\neg \modMem(PMEM) \lor PC \in TCB\} \rightarrow \textbf{G:}\{reset \rightarrow F(PC=TCB_{min})\}
\end{split}
\end{align}
\end{definition}
\end{mdframed}
\vspace{-3mm}
\caption{MCU machine model (subset) in LTL.}\label{fig:machine_model}
\end{figure*}

Based on the high-level properties discussed earlier in this section, we now formalize the subset (relevant to \acro)
of the MCU machine model using LTL. Figure~\ref{fig:machine_model} presents our machine model as a 
set of LTL statements.

LTL statement~(\ref{ltl:mem_write}) models the fact that modifications to a given memory address ($X$) can 
be done either via the CPU or DMA. Modifications by the CPU imply setting $W_{en}=1$ and $D_{addr}=X$.
If $X$ is a memory region, rather than a single address, we denoted that a modification happened within 
the particular region by saying that $D_{addr} \in X$, instead. Conversely, DMA modifications to region $X$ require 
$DMA_{en}=1$ and $DMA_{addr} \in X$. This models the MCU behaviors stated informally in {\bf H2} and {\bf H3}.

In accordance with {\bf M4} and {\bf M5}, a successful modification to a pre-configured \trigger implies changing 
interrupt tables, interrupt handlers, or interrupt configuration registers (ICR-s). Since, per {\bf M4}, the first two are located in 
$PMEM$, modifying them means writing to $PMEM$. The ICR is located in a $DMEM$ location denoted  $IRQ_{cfg}$.
Therefore, the LTL statement~(\ref{ltl:trigger_mod}) models a successful misconfiguration of \trigger as requiring 
a memory modification either within $PMEM$ or within $IRQ_{cfg}$, without causing an immediate system-wide 
reset ($\neg reset$). This is because an immediate reset prevents the modification attempt from taking effect (see {\bf H4}).

LTL~(\ref{ltl:gie}) models that attempts to disable interrupts are reflected by $gie$ CPU signal (per {\bf H5}). In order to 
successfully disable interrupts, \adv must be able to switch interrupts from enabled ($gie=1$) to disabled ($\neg {\bf X}(gie)$ -- 
disabled in the following cycle), without causing an MCU $reset$.

Recall that (from {\bf H1}) $PC$ reflects the address of the instruction currently executing. $PC \in TCB$ implies that 
\acro TCB is currently executing. LTL~(\ref{ltl:no_mod_no_prob}) models {\bf T1}. As long as the initial proper configuration 
of \trigger is never modifiable by untrusted software ($\textbf{G:} \{\neg \mod(\trigger_{cfg}) \lor PC \in TCB\}$) and that 
untrusted software can never globally disable interrupts ($\textbf{G:} \{\neg \disable(irq) \lor {\bf X}(PC) \in TCB\}$), 
a \trigger would always cause TCB execution ($\textbf{G:}\{\trigger \rightarrow F(PC=TCB_{min})\}$).
Recall that we assume that the TCB may update -- though not misconfigure -- \trigger behavior, since the TCB is trusted.
Similarly, LTL~\ref{ltl:no_mod_no_prob_reset} states that, as long as $PMEM$ is never modified by untrusted software, a 
$reset$ will always trigger TCB execution (per {\bf H4}, {\bf M2}, and {\bf M3}).

This concludes our formal model of the default behavior of low-end MCU-s considered in this work.

\subsection{\acro End-To-End Goals Formally}
\begin{figure*}[!ht]
\begin{mdframed}
\linespread{0.5}
\footnotesize
\begin{definition}\label{def:trigger} {\underline{Guaranteed Trigger:}}

\begin{align*}
\textbf{G:}\{
\mathsf{trigger} \rightarrow \text{\bf F}(PC=TCB_{min})\}
\end{align*}
\end{definition}
\begin{definition}\label{def:retrigger} {\underline{Re-Trigger on Failure}}:

\begin{align*}
\hspace*{2.6cm}\textbf{G:}\{
PC \in TCB \rightarrow [~(\neg irq \land \neg dma_{en} \land PC \in TCB) \quad \textbf{W} \quad (PC = TCB_{max} \lor \textbf{F}(PC = TCB_{min})~]\}
\end{align*}
\end{definition}
\end{mdframed}
\vspace{-3mm}
\caption{Formal Specification of \acro end-to-end goals.}\label{fig:e2e-properties}
\end{figure*}

Using the notation from Section~\ref{sec:MCU_assumptions}, we proceed with the formal specification of \acro 
end-goals in LTL. Definition~\ref{def:trigger} specifies the ``guaranteed trigger'' property. It states in LTL that,
whenever a \trigger occurs, a TCB execution/invocation (starting at the legal entry point) will follow.

While Definition~\ref{def:trigger} guarantees that a particular interrupt of interest (\trigger) will cause the TCB execution, 
it does not guarantee proper execution of the TCB code as a whole. The ``re-trigger on failure'' property (per 
Definition~\ref{def:retrigger}) stipulates that, whenever TCB starts execution (i.e., $PC \in TCB$), it must  
execute without interrupts or DMA interference~\footnote{Since DMA could tamper with intermediate state/results 
in $DMEM$.}, i.e., $\neg irq \land \neg dma_{en} \land PC \in TCB$. This condition must hold until:
\begin{enumerate}
\item $PC=TCB_{max}$: the legal exit of TCB is reached, i.e., execution concluded successfully.
\item $F(PC = TCB_{min})$: another TCB execution (from scratch) has been triggered to occur.
\end{enumerate}
In other words, this specification reflects a cyclic requirement: either the security properties of the TCB proper 
execution are not violated, or TCB execution will re-start later.

Note that we use the quantifier Weak Until ({\bf W}) instead regular Until ({\bf U}), because, for some 
embedded applications, the TCB code may execute indefinitely; see Section~\ref{sec:app_actuator} for one such example.

\subsection{\acro Sub-Properties}\label{sec:sub_properties}
\begin{figure*}
\begin{mdframed}
\linespread{0.5}
\footnotesize
\begin{definition}{LTL Sub-Properties implemented and enforced by \acro.}\label{def:LTL_props}~\\

\textbf{Trusted $PMEM$ Updates:}

\begin{align}\label{ltl:trusted_updates_only}
\begin{split}
\text{\bf G}: \ \{
[\neg (PC \in TCB) \land W_{en} \land (D_{addr} \in PMEM)] \lor [DMA_{en} \land (DMA_{addr} \in PMEM)] \rightarrow reset \}
\end{split}
\end{align}
\textbf{IRQ Configuration Protection:}

\begin{align}\label{ltl:irq_cfg_protection}
\begin{split}
\hspace{0.8cm} \text{\bf G}: \ \{
[\neg (PC \in TCB) \land W_{en} \land (D_{addr} \in IRQ_{cfg})] \lor [DMA_{en} \land (DMA_{addr} \in IRQ_{cfg})] \rightarrow reset \}
\end{split}
\end{align}
\textbf{Interrupt Disablement Protection:}

\begin{align}\label{ltl:no_irq_disable}
\begin{split}
& \text{\bf G}: \ \{\neg reset \land gie \land \neg \textbf{X}(gie) \rightarrow (\textbf{X}(PC) \in TCB) \lor \textbf{X}(reset) \}
\end{split}
\end{align}
\textbf{TCB Execution Protection:}

\begin{align}\label{ltl:tcb_exec1}
\begin{split}
\text{\bf G}: \ \{\neg reset \land (PC \in TCB) \land \neg (\text{\bf X}(PC) \in TCB) \rightarrow PC = TCB_{max} \lor ~ \text{\bf X}(reset)\ \}
\end{split}
\end{align}

\begin{align}\label{ltl:tcb_exec2}
\begin{split}
& \text{\bf G}: \ \{\neg reset \land \neg (PC \in TCB) \land (\text{\bf X}(PC) \in TCB) \rightarrow \text{\bf X}(PC) = TCB_{min} \lor ~ \text{\bf X}(reset) \}
\end{split}
\end{align}

\begin{align}\label{ltl:tcb_exec3}
\begin{split}
& \text{\bf G}: \ \{(PC \in TCB) \land (irq \lor dma_{en}) \rightarrow reset \}
\end{split}
\end{align}
\end{definition}
\end{mdframed}
\vspace{-3mm}
\caption{Formal specification of sub-properties verifiably implemented by \acro hardware module.}\label{fig:sub-properties}
\end{figure*}

Based on our machine model and \acro end goals, we now postulate a set of necessary sub-properties to be implemented 
by \acro. Next, Section~\ref{sec:proof_composition} shows that this minimal set of sub-properties suffices to achieve \acro
end-to-end goals with a computer-checked proof. LTL specifications of the sub-properties are presented in 
Figure~\ref{fig:sub-properties}.

\acro enforces that only trusted updates are allowed to $PMEM$. 
\acro hardware issues a system-wide MCU $reset$ upon 
detecting any attempt to modify $PMEM$ at runtime, unless this modification comes from the execution of the TCB code itself.
This property is formalized in LTL~(\ref{ltl:trusted_updates_only}). It prevents any untrusted 
application software from misconfiguring IRQ-Table and interrupt handlers, as well as from modifying the $INIT$ segment and the 
TCB code itself, because these sections are located within $PMEM$. As a side benefit, it also prevents attacks that attempt 
to physically wear off Flash (usually used to implement $PMEM$ in low-end devices) by excessively and repeatedly 
overwriting it at runtime. Similarly, \acro prevents untrusted components from modifying $IRQ_{cfg}$ -- $DMEM$ 
registers controlling the \trigger configuration. This is specified by LTL~\ref{ltl:irq_cfg_protection}.

LTL~\ref{ltl:no_irq_disable} enforces that interrupts can not be globally disabled by untrusted applications.
Since, each \trigger is based on interrupts, disablement of all interrupts would allow untrusted software to 
disable the \trigger itself, and thus the active behavior of \acro. This requirement is specified by checking 
the relation between current and next values of $gie$, using the LTL ne\textbf{X}t operator. In order to switch 
$gie$ from logical $0$ (current cycle) to $1$ (next cycle), TCB must be executing when $gie$ becomes $0$ 
({\bf X}($PC) \in TCB$)), or the MCU will $reset$.

In order to assure that the TCB code is invoked and executed properly, \acro hardware implements 
LTL-s~(\ref{ltl:tcb_exec1}), (\ref{ltl:tcb_exec2}), and~(\ref{ltl:tcb_exec3}). LTL~\ref{ltl:tcb_exec1} enforces 
that the only way for $TCB$'s execution to terminate, without causing a $reset$, is through its last instruction 
(its only legal exit): $PC = TCB_{max}$. This is specified by checking the relation between current and next 
$PC$ values using LTL ne\textbf{X}t operator. If the current $PC$ value is within $TCB$, and next $PC$ value 
is outside $TCB$, then either current $PC$ value must be the address of $TCB_{max}$, or $reset$ is set to 
$1$ in the next cycle. Similarly, LTL~\ref{ltl:tcb_exec2} enforces that the only way for $PC$ to enter $TCB$ is through 
the very first instruction: $TCB_{min}$. This prevents $TCB$ execution from starting at some point in the middle 
of $TCB$, thus making sure that $TCB$ always executes in its entirety.  Finally, LTL~\ref{ltl:tcb_exec3} enforces 
that $reset$ is always set if interrupts or DMA modifications happen during $TCB$'s execution. Even though 
LTLs~\ref{ltl:tcb_exec1} and~\ref{ltl:tcb_exec2} already enforce that PC can not change to anywhere outside $TCB$, 
interrupts could be programmed to return to an arbitrary instruction within the $TCB$. Or, DMA could change 
$DMEM$ values currently in use by TCB. Both of these events can alter TCB behavior and are treated as violations.

Next, Section~\ref{sec:proof_composition} presents a computer-checked proof for the sufficiency of this set of sub-properties 
to imply \acro end-to-end goals. Then, Section~\ref{sec:verified_impl} presents FSM-s from our Verilog implementation, that are 
formally verified to correctly implement each of these requirements.

\subsection{\acro Composition Proof}\label{sec:proof_composition}
\acro end-to-end sufficiency is stated in Theorems~\ref{th:trigger} and~\ref{th:re-trigger}.
\begin{figure}
\begin{mdframed}
\footnotesize
\begin{theorem}\label{th:trigger}
$\text{Definition~\ref{def:machine_model}} \land \text{LTLs \ref{ltl:trusted_updates_only},\ref{ltl:irq_cfg_protection},\ref{ltl:no_irq_disable}} \rightarrow \text{Definition~\ref{def:trigger}}$.
\end{theorem}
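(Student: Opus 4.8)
The plan is to use the machine-model implication~(\ref{ltl:no_mod_no_prob}) as the skeleton of the argument. That statement already asserts that \emph{if} the \trigger configuration is never altered by untrusted code, i.e. $\textbf{G}\{\neg\mod(\trigger_{cfg}) \lor PC \in TCB\}$, \emph{and} interrupts are never globally disabled by untrusted code, i.e. $\textbf{G}\{\neg\disable(irq) \lor \textbf{X}(PC) \in TCB\}$, \emph{then} the Guaranteed Trigger property $\textbf{G}\{\trigger \rightarrow \textbf{F}(PC=TCB_{min})\}$ of Definition~\ref{def:trigger} holds. Hence it suffices to discharge these two antecedents from the three \acro sub-properties, after which a single modus ponens against the machine model closes the proof. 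The two antecedents are established independently.

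First I would prove the trigger-configuration antecedent $\textbf{G}\{\neg\mod(\trigger_{cfg}) \lor PC \in TCB\}$ by contradiction at an arbitrary state. Assume $\mod(\trigger_{cfg})$ holds while $PC \notin TCB$. Unfolding $\mod(\trigger_{cfg})$ via machine-model statement~(\ref{ltl:trigger_mod}) yields $(\modMem(PMEM) \lor \modMem(IRQ_{cfg})) \land \neg reset$, and unfolding each $\modMem(\cdot)$ via~(\ref{ltl:mem_write}) splits it into a CPU-write disjunct ($W_{en}$ with $D_{addr}$ in the region) or a DMA disjunct ($DMA_{en}$ with $DMA_{addr}$ in the region), producing four cases. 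In the two $PMEM$ cases, $PC \notin TCB$ together with the write/DMA access matches the left side of the Trusted $PMEM$ Updates sub-property~(\ref{ltl:trusted_updates_only}), forcing $reset$; in the two $IRQ_{cfg}$ cases it matches IRQ Configuration Protection~(\ref{ltl:irq_cfg_protection}), again forcing $reset$. Every case therefore contradicts the $\neg reset$ we derived, so the antecedent holds globally.

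Second I would prove the interrupt-disablement antecedent $\textbf{G}\{\neg\disable(irq) \lor \textbf{X}(PC) \in TCB\}$, again by contradiction. Assume $\disable(irq)$ holds while $\textbf{X}(PC) \notin TCB$. Unfolding $\disable(irq)$ via machine-model statement~(\ref{ltl:gie}) gives $\neg reset \land gie \land \neg\textbf{X}(gie) \land \neg\textbf{X}(reset)$. Its first three conjuncts are exactly the premise of the Interrupt Disablement Protection sub-property~(\ref{ltl:no_irq_disable}), which then forces $(\textbf{X}(PC) \in TCB) \lor \textbf{X}(reset)$. Since we assumed $\textbf{X}(PC) \notin TCB$, we must have $\textbf{X}(reset)$, directly contradicting the $\neg\textbf{X}(reset)$ obtained from the unfolding. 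Thus this antecedent also holds globally, and combining both antecedents with~(\ref{ltl:no_mod_no_prob}) yields Definition~\ref{def:trigger}.

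I expect the argument to be short and almost entirely mechanical once this skeleton is fixed; there is no quantitative estimation, only propositional and temporal case analysis. The one place demanding care is the ``unfolding'' step: the high-level predicates $\mod(\trigger_{cfg})$, $\disable(irq)$, and $\modMem(\cdot)$ must be rewritten through the machine-model definitions so that their expansions syntactically line up with the hypotheses of the sub-properties (in particular matching $\neg(PC \in TCB)$ against $PC \notin TCB$, and matching the $\neg reset \land gie \land \neg\textbf{X}(gie)$ pattern), and the $PMEM$/$IRQ_{cfg}$ against CPU/DMA case split must be checked exhaustive. Since the final artifact is a computer-checked LTL proof, the real effort is presenting these unfoldings and the case split to the theorem prover in a form it accepts, rather than any deep mathematical insight.
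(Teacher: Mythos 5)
Your proposal is correct and follows essentially the same route as the paper's own argument: use machine-model LTL~(\ref{ltl:no_mod_no_prob}) as the skeleton, discharge its trigger-configuration antecedent via LTLs~(\ref{ltl:trigger_mod}), (\ref{ltl:mem_write}), (\ref{ltl:trusted_updates_only}), and (\ref{ltl:irq_cfg_protection}), discharge its interrupt-disablement antecedent via LTLs~(\ref{ltl:gie}) and (\ref{ltl:no_irq_disable}), and conclude by modus ponens. Your write-up merely makes explicit the case splits and contradiction steps that the paper leaves as intuition (with the full mechanical details deferred to its computer-checked proof).
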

\begin{theorem}\label{th:re-trigger}
$\text{Definition~\ref{def:machine_model}} \land \text{LTLs~\ref{ltl:trusted_updates_only},\ref{ltl:tcb_exec1},\ref{ltl:tcb_exec2},\ref{ltl:tcb_exec3}} \rightarrow \text{Definition~\ref{def:retrigger}}$.
\end{theorem}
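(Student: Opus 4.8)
The plan is to unfold the Weak Until via $\phi\,\textbf{W}\,\psi \equiv (\phi\,\textbf{U}\,\psi) \lor \textbf{G}\phi$, with $\phi := \neg irq \land \neg dma_{en} \land PC \in TCB$ and $\psi := (PC = TCB_{max}) \lor \textbf{F}(PC = TCB_{min})$, and then reason over an arbitrary trace satisfying Definition~\ref{def:machine_model} together with LTLs~\ref{ltl:trusted_updates_only},~\ref{ltl:tcb_exec1},~\ref{ltl:tcb_exec2},~\ref{ltl:tcb_exec3}. Fix any time $t$ at which $PC \in TCB$; the goal is to show $\phi\,\textbf{W}\,\psi$ holds at $t$. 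The key structural observation is that $\phi$ is exactly the ``atomic, uninterrupted, inside-$TCB$'' invariant, so the proof reduces to analysing the \emph{first} step $t' \ge t$ at which $\phi$ is violated: if no such step exists, then $\textbf{G}\phi$ holds and we are done immediately via the right disjunct of the Weak Until.

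First I would establish a reset lemma, namely $\textbf{G}\{reset \rightarrow \textbf{F}(PC = TCB_{min})\}$. This is the consequent of machine-model LTL~\ref{ltl:no_mod_no_prob_reset}, whose antecedent $\textbf{G}\{\neg\modMem(PMEM) \lor PC \in TCB\}$ (equivalently $\modMem(PMEM)\rightarrow PC\in TCB$) must be discharged. I would do this by combining the memory-access semantics of LTL~\ref{ltl:mem_write} (any $\modMem(PMEM)$ is witnessed by $W_{en}\land D_{addr}\in PMEM$ or $DMA_{en}\land DMA_{addr}\in PMEM$) with the trusted-update sub-property LTL~\ref{ltl:trusted_updates_only}, which forces $reset$ on exactly those witnesses when $PC\notin TCB$, and on every DMA-to-$PMEM$. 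Hence no \emph{effective} modification of $PMEM$ happens outside the TCB, so the boot/$INIT$ and TCB images used by the reset routine (H4, M2, M3) remain intact and every $reset$ indeed leads to $PC = TCB_{min}$.

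The main argument is then a case split on the cause of the first violation at $t'$ (including the degenerate case $t'=t$), each branch exhibiting a state in $[t,t']$ where $\psi$ holds while $\phi$ holds on the strictly earlier prefix, which is precisely the witness structure required by $\textbf{U}$. Case (i): if $\phi$ breaks because $irq \lor dma_{en}$ fires while still $PC \in TCB$, then LTL~\ref{ltl:tcb_exec3} forces $reset$ at that step and the reset lemma gives $\textbf{F}(PC=TCB_{min})$, i.e.\ $\psi$. Case (ii): if $\phi$ breaks because $PC$ leaves $TCB$ (so $PC\in TCB$ at $t'-1$ but $\textbf{X}(PC)\notin TCB$), then LTL~\ref{ltl:tcb_exec1} yields either $PC = TCB_{max}$ at $t'-1$ (the first disjunct of $\psi$) or $\textbf{X}(reset)$, the latter again routed through the reset lemma to $\textbf{F}(PC=TCB_{min})$. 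Case (iii): a pre-existing $reset$ at $t'-1$ is handled directly by the reset lemma. Since $t'$ is the \emph{first} violation, $\phi$ holds throughout $[t,t'-1]$, discharging the prefix condition. Sub-property LTL~\ref{ltl:tcb_exec2} closes the loop on the entry side: it guarantees that whenever $PC$ (re-)enters $TCB$ it does so at $TCB_{min}$, so the $\textbf{F}(PC=TCB_{min})$ obtained above is a genuine fresh invocation and no mid-$TCB$ re-entry can masquerade as one.

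I expect the main obstacle to be the reset lemma, specifically discharging the antecedent of LTL~\ref{ltl:no_mod_no_prob_reset}. The delicate point is the intended reading of $\modMem(PMEM)$: if it denotes an \emph{effective} write (one not nullified by a simultaneous $reset$, matching the $\neg reset$ convention already used in LTL~\ref{ltl:trigger_mod}), then the discharge is the clean two-line LTL argument above; if it denotes a raw \emph{attempt}, then an untrusted attempt co-occurring with $reset$ formally falsifies the antecedent, and one must instead appeal to H4/M2/M3 to argue that such a reset discards the pending write and re-executes an intact boot image. I would pin down this convention first, since the remainder is routine LTL bookkeeping over the $\textbf{U}$-prefix obligation.
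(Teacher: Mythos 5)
Your proposal follows essentially the same route as the paper's own (intuition-level) proof: both hinge on the reset lemma obtained from machine-model LTL~\ref{ltl:no_mod_no_prob_reset} together with LTL~\ref{ltl:trusted_updates_only}, then dispatch violations during TCB execution via LTL~\ref{ltl:tcb_exec1} (illegal exit), LTL~\ref{ltl:tcb_exec2} (illegal entry), and LTL~\ref{ltl:tcb_exec3} (interrupt/DMA activity), each funneling into $reset \rightarrow \textbf{F}(PC=TCB_{min})$. Your case analysis on the first violation of the atomicity invariant and your flagged concern about whether $\modMem$ denotes an effective write or a raw attempt are refinements of detail (the paper defers such bookkeeping to its computer-checked Spot proof), not a different approach.
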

\end{mdframed}
\end{figure}
The complete computer-checked proofs (using Spot2.0~\cite{spot}) of Theorems~\ref{th:trigger} and~\ref{th:re-trigger} 
are publicly available at~\cite{repo}. Below we present the intuition behind them.

\begin{proof}[Proof of Theorem~\ref{th:trigger} (Intuition)]
From machine model's LTL~(\ref{ltl:no_mod_no_prob}), as long as the (1) initial trigger configuration is never modified from 
outside the $TCB$; and (2) interrupts are never disabled from outside the $TCB$; it follows that a \trigger will cause 
a proper invocation of the TCB code. Also, successful modifications to the \trigger's configuration imply writing to 
$PMEM$ or $IRQ_{cfg}$ without causing a $reset$ (per LTL (\ref{ltl:trigger_mod})). Since \acro verified implementation 
guarantees that memory modifications (specified in LTL~(\ref{ltl:mem_write})) to $PMEM$ (LTL~(\ref{ltl:trusted_updates_only})) 
or $IRQ_{cfg}$ (LTL~(\ref{ltl:irq_cfg_protection})) always cause a $reset$, illegal modifications to $trigger_{cfg}$ are 
never successful. Finally, LTL~(\ref{ltl:no_irq_disable}) assures that any illegal interrupt disablement always causes 
a $reset$, and is thus never successful). Therefore, \acro satisfies all necessary conditions to guarantee the goal 
in Definition~\ref{def:trigger}.
\end{proof}

\begin{proof}[Proof of Theorem~\ref{th:re-trigger} (Intuition)]
The fact that a reset always causes a later call to the TCB follows from the machine model's 
LTL~(\ref{ltl:no_mod_no_prob_reset}) and \acro guarantee in LTL~(\ref{ltl:trusted_updates_only}).
LTLs~(\ref{ltl:tcb_exec1}) and ~(\ref{ltl:tcb_exec1}) ensure that the TCB executable is properly invoked 
and executes atomically, until its legal exit. Otherwise a $reset$ flag is set, which (from the above 
argument) implies a new call to TCB. Finally, LTL~\ref{ltl:tcb_exec3} assures that any interrupt or 
DMA activity during TCB execution will cause a $reset$, thus triggering a future TCB call and 
satisfying Definition~\ref{def:retrigger}.
\end{proof}
See~\cite{repo} for the formal computer-checked proofs.

\subsection{Sub-Module Implementation+Verification}\label{sec:verified_impl}
Following the sufficiency proof in Section~\ref{sec:proof_composition} for sub-properties in 
Definition~\ref{def:LTL_props}, we proceed with the implementation and formal verification of \acro hardware
using the NuSMV model-checker (see Section~\ref{sec:verif_background} for details).

\acro modules are implemented as Mealy FSMs (where outputs change with the current state and current inputs) 
in Verilog. Each FSM has one output: a local $reset$. \acro output $reset$ is given by the disjunction (logic {\it or}) 
of local $reset$-s of all sub-modules. Thus, a violation detected by any sub-module causes \acro to trigger an 
immediate MCU $reset$. For the sake of easy presentation we do not explicitly represent the value of 
$reset$ in the figures. Instead, we define the following implicit representation:
\begin{compactenum}
\item $reset$ output is 1 whenever an FSM transitions to the $RESET$ state (represented in red color);
\item $reset$ output remains 1 until a transition leaving the $RESET$ state is triggered;
\item $reset$ output is 0 in all other states (represented in blue color).
\end{compactenum}
Note that all FSM-s remain in the $RESET$ state until $PC=0$, which signals that the MCU reset routine finished.

\begin{figure}
\begin{center}
\noindent\resizebox{0.8\columnwidth}{!}{%
\begin{tikzpicture}[->,>=stealth',auto,node distance=8.0cm,semithick]
\tikzstyle{every state}=[minimum size=1.5cm]
\tikzstyle{every node}=[font=\large]

\node[state, fill={rgb:blue,1;white,2}] (A){$RUN$};
\node[state, fill={rgb:red,1;white,2}]         (B) [right of=A,align=center]{$RESET$};

\path[->,every loop/.style={looseness=8}] 
(A) edge [loop above] node {$otherwise$} (A)
(B) edge [loop above] node {$otherwise$} (B);

\draw[->] (A.345) -- node[rotate=0,below, align=center,auto=right] {\scriptsize \shortstack{$\neg(PC \in TCB) \quad \land $\\$ (W_{en} \land D_{addr} \in PMEM \lor DMA_{en} \land DMA_{addr} \in PMEM)$}} (B.195);
\draw[<-] (A.15) -- node[rotate=0,above] {\scriptsize \shortstack{$PC=0$}} (B.165);
\end{tikzpicture}
}
\vspace{-1mm}
\caption{Verified FSM for LTL~\ref{ltl:trusted_updates_only}.}
\label{fig:no_mod_PMEM}
\end{center}
\end{figure}

Figure~\ref{fig:no_mod_PMEM} illustrates \acro sub-module responsible for assuring that $PMEM$ modifications 
are only allowed from within the TCB. This minimal 2-state machine works by monitoring $PC$, $W_{en}$, 
$D_{addr}$, $DMA_{en}$, and $DMA_{addr}$ to detect illegal modification attempts by switching from $RUN$ 
to $RESET$ state, upon detection of any such action. It is verified to adhere to LTL~(\ref{ltl:trusted_updates_only}).
A similar FSM is used to verifiably enforce LTL~(\ref{ltl:irq_cfg_protection}), with the only distinction of checking for 
writes within $IRQ_{cfg}$ region instead, i.e., $D_{addr} \in IRQ_{cfg})$ and $DMA_{addr} \in IRQ_{cfg})$. 
We omit the illustration of this FSM to conserve space, due to page limits. 

Figure~\ref{fig:no_irq_disable} presents an FSM implementing LTL~\ref{ltl:no_irq_disable}.
It monitors the ``global interrupt enable'' ($gie$) signal to detect attempts to illegally disable interrupts. 
It consists of three states: (1) $ON$, representing execution periods where $gie=1$; (2) $OFF$, for cases 
where $gie=0$, and (3) $RESET$. To switch between $ON$ and $OFF$ states, this FSM requires $PC \in TCB$, 
thus preventing misconfiguration by untrusted software.

\begin{figure}
\begin{center}
\noindent\resizebox{0.6\columnwidth}{!}{%
\begin{tikzpicture}[->,>=stealth',auto,node distance=4.0cm,semithick]
\tikzstyle{every state}=[minimum size=1.5cm]
\tikzstyle{every node}=[font=\large]

\node[state,fill={rgb:red,1;white,2}] (A){$RESET$};
\node[state,fill={rgb:blue,1;white,2}]         (B) [above of=A,left of=A, align=center]{$~OFF~$};
\node[state,fill={rgb:blue,1;white,2}]         (C) [above of=A,right of=A ,align=center]{$~ON~$};

\path[->,every loop/.style={looseness=8}] (A) edge [bend right=10]  node [right] {\small $PC=0$} (B)
edge [loop below] node {\small$otherwise$} (A)
(B) edge [loop above] node {$\neg~gie$} (C)
edge [bend right=10] node [below] {\small$gie \land PC \in TCB $} (C)
edge [bend right=10] node [left] {\small$otherwise$} (A)
(C) edge [loop above] node {\shortstack{$gie$}} (C)
edge [bend right=10] node [above] {\small$\neg gie \land PC \in TCB $} (B)
edge node {\small$otherwise$} (A);
\end{tikzpicture}
}
\vspace{-1mm}
\caption{Verified FSM for LTL~\ref{ltl:no_irq_disable}.}
\label{fig:no_irq_disable}
\end{center}
\end{figure}

Finally, the FSM in Figure~\ref{fig:atomicity_fsm} verifiably implements LTL-s~\ref{ltl:tcb_exec1},~\ref{ltl:tcb_exec2}, 
and~\ref{ltl:tcb_exec3}. This FSM has 5 states, one of which is $RESET$. Two basic states correspond to whenever: 
the TCB is executing (state ``$\in TCB$''), and not executing (state ``$\notin TCB$''). From $\notin TCB$ the only 
reachable path to $\in TCB$ is through state $TCB_{entry}$, which requires $PC=TCB_{min}$ -- TCB only legal entry point.
Similarly, from $\in TCB$ the only reachable path to $\notin TCB$ is through state $TCB_{exit}$, which requires 
$PC=TCB_{max}$ -- TCB only legal exit. Also, in all states where $PC \in TCB$ (including entry and exit transitions) 
this FSM requires DMA and interrupts to remain inactive. Any violation of these requirements, in any of the four regular 
states, causes the FSM transition to $RESET$, thus enforcing TCB execution protection.

\begin{figure}
\begin{center}
\noindent\resizebox{\columnwidth}{!}{%
\begin{tikzpicture}[->,>=stealth',auto,node distance=4.0cm,semithick]
\tikzstyle{every state}=[minimum size=1.5cm]
\tikzstyle{every node}=[font=\large]

\node[state,fill={rgb:red,1;white,2}] (A){$RESET$};
\node[state,fill={rgb:blue,1;white,2}]         (B) [above of=A,align=center, yshift=-1cm]{$~\notin TCB~$};
\node[state,fill={rgb:blue,1;white,2}]         (C) [left  of=A,align=center]{$TCB_{entry}$};
\node[state,fill={rgb:blue,1;white,2}]         (E) [below of=A, yshift=1cm] {$~\in TCB~$};
\node[state,fill={rgb:blue,1;white,2}]         (D) [right of=A]{$~TCB_{exit}~$};

\path[->,every loop/.style={looseness=8}] (A) edge [bend right=10]  node [right] {$PC=0$} (B)
edge [out=330,in=300,looseness=8] node[above right, yshift=.2cm] {\small$otherwise$} (A)
(B)  edge [loop above] node {$PC<TCB_{min}\,\lor\,PC>TCB_{max}$} (C)
edge node [above left] {$PC=TCB_{min}\land \neg~irq \land \neg~DMA_{en}$} (C)
edge [bend right=10] node [left] {\small$otherwise$} (A)
(C)  edge [loop left] node [above right, left] {\shortstack{$PC=TCB_{min}$\\~$\land \neg~irq \land \neg~DMA_{en}$}} (C)
edge node [below left] {\shortstack{$(PC>TCB_{min}\,\land\,PC<TCB_{max})$\\~$\land \neg~irq \land \neg~DMA_{en}$}} (E)
edge node {\small$otherwise$} (A)
(E)  edge [loop below] node {\shortstack{$(PC>TCB_{min}\,\land\,PC<TCB_{max})$\\~$\land \neg~irq \land \neg~DMA_{en}$}} (C)
edge node [below right] {$PC=TCB_{max}\land \neg~irq \land \neg~DMA_{en}$} (D)
edge node [left] {\small$otherwise$} (A)
(D)  edge [loop right] node [above left, right]  {\shortstack{$PC=TCB_{max}$\\~$\land \neg~irq \land \neg~DMA_{en}$}} (D)
edge node [above right] {\shortstack{$(PC<TCB_{min}\,\lor\,PC>TCB_{max})$\\~$\land \neg~irq \land \neg~DMA_{en}$}} (B)
edge node [above]  {\small$otherwise$} (A);
\end{tikzpicture}
}
\vspace{-1mm}
\caption{Verified FSM for LTLs~\ref{ltl:tcb_exec1}--\ref{ltl:tcb_exec3}.}
\label{fig:atomicity_fsm}
\end{center}
\end{figure}

\subsection{TCB Confidentiality}\label{sec:confidentiality}
One instance of \acro enables confidentiality of TCB data and code with respect to 
untrusted applications. This is of particular interest when \func implements cryptographic 
functions or privacy-sensitive tasks. 

This goal can be achieved by including and epilogue phase in the TCB executable, with the goal of 
performing a $DMEM$ cleanup, erasing all traces of the TCB execution from the stack and heap.
While the TCB execution may be interrupted before the execution of the epilogue phase, such an 
interruption will cause an MCU $reset$. The \textit{Re-Trigger on Failure} property assures that 
TCB code will execute (as a whole) after any $reset$ and will thus erase remaining execution 
traces from $DMEM$ before subsequent execution of untrusted applications.
In a similar vein, if confidentiality of the executable is desirable, it can be implemented following 
LTL~(\ref{ltl:confidentiality}), which formalizes read attempts based on $R_{en}$ signal:
\begin{align}\label{ltl:confidentiality}
\small
\begin{split}
&\text{\bf G}: \ \{ \\
&\quad [\neg (PC \in TCB) \land R_{en} \land (D_{addr} \in TCB) \lor \\
&\quad DMA_{en} \land (DMA_{addr} \in TCB)] \rightarrow reset\\
&\}
\end{split}
\end{align}
An FSM implementing this property is shown in Figure~\ref{fig:FSM_confidentiality}. Note that, despite 
visual similarity with the FSM in Figure~\ref{fig:no_mod_PMEM}, the confidentiality 
FSM checks for \emph{reads} (instead of writes) to the TCB (instead of entire $PMEM$).

\begin{figure}
\begin{center}
\noindent\resizebox{0.8\columnwidth}{!}{%
\begin{tikzpicture}[->,>=stealth',auto,node distance=8.0cm,semithick]
\tikzstyle{every state}=[minimum size=1.5cm]
\tikzstyle{every node}=[font=\large]

\node[state, fill={rgb:blue,1;white,2}] (A){$RUN$};
\node[state, fill={rgb:red,1;white,2}]         (B) [right of=A,align=center]{$RESET$};

\path[->,every loop/.style={looseness=8}] 
(A) edge [loop above] node {$otherwise$} (A)
(B) edge [loop above] node {$otherwise$} (B);

\draw[->] (A.345) -- node[rotate=0,below, align=center,auto=right] {\scriptsize \shortstack{$\neg(PC \in TCB) \quad \land $\\$ (R_{en} \land D_{addr} \in TCB \lor DMA_{en} \land DMA_{addr} \in TCB)$}} (B.195);
\draw[<-] (A.15) -- node[rotate=0,above] {\scriptsize \shortstack{$PC=0$}} (B.165);
\end{tikzpicture}
}
\vspace{-1mm}
\caption{Verified FSM for LTL~\ref{ltl:confidentiality}.}
\label{fig:FSM_confidentiality}
\end{center}
\end{figure}

This property prevents external reads to the TCB executable by monitoring $R_{en}$, $D_{addr}$, and DMA. 
When combined with the aforementioned erasure epilogue, it also enables secure storage of cryptographic 
secrets within the TCB binary (as in architectures such as~\cite{verify_and_revive,simple, hydra}).
This part of \acro design is optional, since some embedded applications do not require confidentiality,
e.g., those discussed in Sections~\ref{sec:app_actuator} and~\ref{sec:app_scheduling}.

\subsection{Resets \& Availability}\label{sec:availability}
One important remaining issue is {\it availability}. For example, malware might interrupt (or tamper with) 
with $INIT$ execution after a $reset$ preventing the subsequent execution of TCB. Also, malware could
to interrupt the TCB execution, after each \emph{re-\trigger}, with the goal of resetting the MCU indefinitely, 
and thereby preventing TCB execution from ever completing its task.

We observe that such actions are not possible, since they would require either DMA activity or interrupts to: 
(1) hijack $INIT$ control-flow; or (2) abuse \acro to successively $reset$ the MCU during TCB 
execution after each re-trigger. Given {\bf H5} interrupts are disabled by default at boot time.
Additionally, {\bf H4} states that any prior DMA configuration is cleared to the default disabled state after a $reset$. 
Hence, $INIT$ and the first execution of TCB after a $reset$ cannot be interrupted or tampered with by DMA.

Finally, we note that, despite preventing security violations by (and implementing re-trigger based on) resetting 
the MCU, \acro does not provide any advantage to malware that aims to simply disrupt execution of (non-TCB) 
applications by causing $resets$. Any software running on bare metal (including malware) can always intentionally 
reset the MCU. Resets are the default mechanism to recover from regular software faults 
on unmodified (off-the-shelf) low-end MCU-s, regardless of \acro.

\section{Sample Applications}\label{sec:applications}
Many low-end MCU use-cases and applications can benefit from \trigger-based active RoTs.
To demonstrate generality of \acro, we prototyped three concrete examples, each with a different type of \trigger-s.
This section overviews these examples: (1) GPIO-TCB uses external analog events (Section~\ref{sec:app_actuator}), 
TimerTCB uses timers (Section~\ref{sec:app_scheduling}), and NetTCB uses network events (Section~\ref{sec:app_nettcb}). 
Finally, Section~\ref{sec:app_authWDT} discusses how \acro can match active security services 
proposed in~\cite{proactive1} and~\cite{proactive2}.

\subsection{GPIO-TCB: Critical Sensing+Actuation}\label{sec:app_actuator}
The first example, GPIO-TCB, operates in the context of a safety-critical temperature sensor.
We want to use \acro to assure that the sensor's most safety-critical function -- \emph{sounding an alarm} -- is 
never prevented from executing due to software compromise of the underlying MCU.
We use a standard built-in MCU interrupt, based on General Purpose Input/Output (GPIO) to implement \trigger. 
Since this is our first example, we discuss GPIO-TCB in more detail than the other two.

\begin{figure}
\begin{lstlisting}[basicstyle=\tiny, numberstyle=\tiny]
int main() {
	TCB(0);
	main_loop();
	return 0; 
}
\end{lstlisting}
\vspace{-0.5cm}
\caption{Program Entry Point}\label{lst:main}
\end{figure}

As shown in Figure~\ref{lst:main}, MCU execution always starts by calling the TCB (at line 2). Therefore,
after MCU initialization/reset, unprivileged (non-TCB) applications can only execute after the TCB; assuming, 
of course, that formal guarantees discussed in Section~\ref{sec:core} hold. These applications are 
implemented inside \textit{main\_loop} function (at line 3).

The correct \trigger configuration in GPIO-TCB can be achieved in two ways. The first way is to set $IRQ_{cfg}$ to 
the desired parameters at MCU deployment time, by physically writing this configuration to $IRQ_{cfg}$.
The second option is to implement this configuration in software as a part of the TCB. Since the TCB 
is always the first to run after initialization/reset, it will configure $IRQ_{cfg}$ correctly, enabling 
subsequent \trigger-s at runtime.

\begin{figure}
\begin{lstlisting}[basicstyle=\tiny, numberstyle=\tiny]
void setup (void) {
	P1DIR  = 0x00;
	P1IE   = 0x01;
	P1IES  = 0x00;
	P1IFG  = 0x00;
}
\end{lstlisting}
\vspace{-0.5cm}
\caption{Trigger Setup}\label{lst:trigger}
\end{figure}

Figure~\ref{lst:trigger} exemplifies $IRQ_{cfg}$ configuration, implemented as part of the TCB, i.e., called from within 
the TCB. This \texttt{setup} function is statically linked to be located inside the TCB memory region, thus respecting 
``TCB Execution Protection'' LTL rules (see Definition~\ref{def:LTL_props}). This $IRQ_{cfg}$ setup first configures 
the physical port $P1$ as an input (line 2, ``P1 direction'' set to 0x00, whereas 0x01 would set it as an output). 
At line 3, $P1$ is set as ``interrupt-enabled'' ($P1IE=0x01$). A value of $P1IES  = 0x00$ (line 4) indicates that, 
if the physical voltage input of $P1$ changes from logic $0$ to $1$ (``low-to-high` transition), a GPIO interrupt will 
be triggered and the respective handler will be called. Finally, $P1IFG$ is cleared to indicate that the MCU is 
free to receive interrupts (as opposed to busy). We note that this initial trusted configuration of $IRQ_{cfg}$ cannot 
be modified afterwards by untrusted applications due to \acro guarantees (see Section~\ref{sec:core}).
Based on this configuration, an analog temperature sensing circuit (i.e., a voltage divider implemented using a 
thermistor (i.e., a resistance thermometer -- a resistor whose resistance varies with temperature) 
is connected to port P1. Resistances in this circuit are set to achieve $5 V$ (logic $1$) when temperature 
exceeds a fixed threshold, thus triggering a $P1$ interrupt.

P1 interrupt is handled by the function in Figure~\ref{lst:interrupt_gpio}. This is configured using the 
\texttt{$interrupt(PORT1\_VECTOR)$} macro. This handler essentially calls \acro TCB. Parameter 
$1$ in the TCB call distinguishes a regular TCB call from a TCB call following initialization/reset.

\begin{figure}
\begin{lstlisting}[basicstyle=\tiny, numberstyle=\tiny]
interrupt(PORT1_VECTOR) port1_isr(void) {
	TCB(1);
} 
\end{lstlisting}
\vspace{-0.5cm}
\caption{GPIO Handling Routine}\label{lst:interrupt_gpio}
\end{figure}

\begin{figure}
\begin{lstlisting}[basicstyle=\tiny, numberstyle=\tiny]
TCB (uint8_t init) {
	dint();
	if (!init) {
		setup();
	}
	volatile uint_64 i=0;
	P3DIR = 0x01;
	P3OUT = 0x01;
	while (i<100000000) i++;
	P3OUT = 0x00;
	eint();
	return();
}
\end{lstlisting}
\vspace{-0.5cm}
\caption{$IRQ_{cfg}$ initialization}\label{lst:tcb}
\end{figure}

Figure~\ref{lst:tcb} depicts the TCB implementation of \func. Once triggered, TCB disables interrupts (\texttt{dint}), 
calls \texttt{setup} (if this is the first TCB call after initialization/reset), and activates GPIO port P3 for a 
predefined number of cycles. P3 is connected to a buzzer (a high frequency oscillator 
circuit used for generating a buzzing sound), guaranteeing that the 
alarm will sound. Upon completion, TCB re-enables interrupts and returns control to the regular application(s).

Note that, as discussed in Section~~\ref{sec:core}, executables corresponding to Figures~\ref{lst:interrupt_gpio} 
and~\ref{lst:tcb} are also protected by \acro. Thus, their behavior cannot be modified by untrusted/compromised software.

\subsection{TimerTCB: Secure Real-Time Scheduling}\label{sec:app_scheduling}
The second example of \acro, TimerTCB, is in the domain of real-time task scheduling. 
Without \acro (even in the presence of a passive RoT), a compromised MCU controlled by malware could ignore performing
its periodic security- or safety-critical tasks. (Recall that targeted MCU-s typically run bare-metal software, with no OS 
support for preempting tasks). We show how \acro can ensure that a prescribed task, implemented within the TCB 
periodically executes.

Unlike our first example in Section~\ref{sec:app_actuator}, TimerTCB only requires modifying $IRQ_{cfg}$, as illustrated in 
Figure~\ref{lst:timer}. This shows the relative ease of use of \acro. The \textit{setup} function is modified to enable the MCU's 
built-in timer to cause interrupts (at line 2). Interrupts are set to occur whenever the timer's counter reaches a desired value 
(at line 3). The timer is set to increment the counter with edges of a particular MCU clock ($MC1$, at line 4).
As in the first example, the corresponding interrupt handler is set to always call the TCB (Figure~\ref{lst:interrupt_timer}). 
In turn, the TCB can implement \func as an arbitrary safety-critical periodic task.

\begin{figure}
\begin{lstlisting}[basicstyle=\tiny, numberstyle=\tiny]
void setup (void) {
	CCTL0 = CCIE;
	CCR0  = 1000000;
	TACTL = TASSEL_2 + MC_1;
}
\end{lstlisting} \vspace{-0.5cm}
\caption{Timer Trigger Setup}\label{lst:timer}
\end{figure}

\begin{figure}
\begin{lstlisting}[basicstyle=\tiny, numberstyle=\tiny]
interrupt(TIMERA0_VECTOR) timera_isr(void) {
	TCB(1)
} 
\end{lstlisting} \vspace{-0.5cm}
\caption{Timer Handle Routine}\label{lst:interrupt_timer}
\end{figure}

\subsection{NetTCB: Network Event-based \trigger}\label{sec:app_nettcb}
The last example, NetTCB, uses network event-based \trigger to ensure that the TCB 
quickly filters all received network packets to identify those that carry TCB-destined commands 
and take action. Incoming packets that do not contain such commands are ignored by the TCB 
and passed on to applications through the regular interface (i.e., reading from the UART buffer). 
In this example, we implement guaranteed receipt of external $reset$ commands from some trusted
remote entity. This functionality might be desirable after an MCU malfunction (e.g., due to a deadlock) is detected.

In NetTCB, \trigger is configured to trap network events. $IRQ_{cfg}$ is set such that each incoming UART message
causes an interrupt, as shown in Figure~\ref{lst:uart}. The TCB implementation, shown in Figure~\ref{lst:nettcb}, filters 
messages based their initial character $'r'$ which is predefined as a command to $reset$ the MCU.
{\bf Note that:} in practice such critical commands should be authenticated by the TCB. Although this authentication  
should be implemented within the TCB, we omit it from this discussion for the sake of simplicity, and refer
to \cite{brasser2016remote} for a discussion of authentication of external requests in this setting.

\begin{figure}
\begin{lstlisting}[basicstyle=\tiny, numberstyle=\tiny]
void setup (void) {
	UART_BAUD = BAUD;
	UART_CTL  = UART_EN | UART_IEN_RX;
}
\end{lstlisting} \vspace{-0.5cm}
\caption{UART Trigger Setup}\label{lst:uart}
\end{figure}

\begin{figure}
\begin{lstlisting}[basicstyle=\tiny, numberstyle=\tiny]
wakeup interrupt (UART_RX_VECTOR) INT_uart_rx(void) {
	TCB(1);
}

TCB (uint8_t init) {
	dint();
	if (!init) {
		setup();
	}
	rxdata = UART_RXD;
	if (rxdata == 'r') {
		reset();
	}
	eint();
	return();
}
\end{lstlisting} \vspace{-0.5cm}
\caption{NetTCB Handler Routine and TCB Implementation}\label{lst:nettcb}
\end{figure}

\subsection{Comparison with~\cite{proactive1} and~\cite{proactive2}}\label{sec:app_authWDT}
Recent work proposed security services that can be interpreted as active RoT-s. However, these efforts 
aimed at higher-end embedded devices and require substantial hardware support: Authenticated 
Watchdog Timer (AWDT) implemented as a separate (stand-alone) microprocessor~\cite{proactive1},
or ARM TrustZone~\cite{proactive2}. Each requirement is, by itself, far more expensive than the cost 
of a typical low-end MCU targeted in this paper (see Section~\ref{sec:scope}).

In terms of functionality, both \cite{proactive1} and \cite{proactive2} are based on timers. They use 
AWDT to force a reset of the device. As in \acro, in these designs, the TCB is the first code to execute;
this property is referred to as ``gated boot'' in~\cite{proactive1}. However, unlike~\acro, ~\cite{proactive1,proactive2} 
do not consider active RoT behavior obtainable from other types of interrupts, e.g., as in \acro examples  
in Sections~\ref{sec:app_actuator} and~\ref{sec:app_nettcb}). We believe that this is partly because 
these designs were originally intended as an active means to enforce memory integrity, rather than a 
general approach to guaranteed execution of trusted tasks based on arbitrary \trigger-s (as in \acro). 
Note that \acro design is general enough to realize an active means to enforce memory integrity. 
This can be achieved by incorporating an integrity-ensuring function (e.g, a suitable cryptographic keyed hash) 
into \acro TCB and using it to check PMEM state upon a timer-based \trigger. 

Finally, we emphasize that prior results involved neither formally specified designs nor 
formally verified open-source implementations. As discussed in Section~\ref{sec:intro}, 
we believe these features to be important for eventual adoption of this type of architecture.

\section{Implementation \& Evaluation}\label{sec:evaluation}

We prototyped \acro (adhering to its architecture in Figure~\ref{fig:architecture}) using an open-source 
implementation of the popular MSP430 MCU -- openMPS430~\cite{openmsp430} from OpenCores. 
In addition to \acro module, we reserve, by default, $2$ KBytes of PMEM for TCB functions.
This size choice is configurable at manufacturing time and MCU-s manufactured for different 
purposes can choose different sizes. In our prototype, $2$ KBytes is a a reasonable choice, 
corresponding to $5-25\%$ of the typical amount of PMEM in low-end MCU-s.
The prototype supports one \trigger of each type: timer-based, external hardware, and network. 
This support is achieved by implementing the $IRQ_{cfg}$ protection, as described in Section~\ref{sec:core}. 
The MCU already includes multiple timers and GPIO ports that can be selected to act as \trigger-s. 
By default, one of each is used by our prototype. This enables the full set of types of applications 
discussed in Section~\ref{sec:applications}.

As a proof-of-concept, we use Xilinx Vivado to synthesize our design and deploy it using the Basys3 Artix-7 FPGA board. Prototyping using FPGAs is common in both research and industry. Once a hardware design is synthesizable in an FPGA, the same design can be used to manufacture an Application-Specific Integrated Circuit (ASIC) at larger scale.
\begin{center}
\vspace{-2mm}
\textbf{Hardware \& Memory Overhead}
\vspace{-2mm}
\end{center}
Table~\ref{tab:hw} reports \acro hardware overhead as compared to unmodified OpenMSP430~\cite{openmsp430}. 
Similar to the related work~\cite{sancus,apex,vrased,litehax,cflat,lofat,atrium}, we consider hardware overhead in 
terms of additional Look-Up Tables (LUT-s) and registers. The increase in the number of LUT-s can be used as 
an estimate of the additional chip cost and size required for combinatorial logic, while the number of registers 
offers an estimate on the memory overhead required by the sequential logic in \acro FSMs.

\acro hardware overhead is small with respect to the unmodified MCU core -- it requires 2.3\% and 4.8\% 
additional LUT-s and registers, respectively. In absolute numbers, \acro adds 33 registers and 42 LUT-s to the underlying MCU.
\begin{center}
\vspace{-2mm}
\textbf{Runtime \& Memory Overhead}
\vspace{-2mm}
\end{center}
We observed no discernible overhead for software execution time on the \acro-enabled MCU. 
This is expected, since \acro introduces no new instructions or modifications to the MSP430 ISA and 
to the application executables. \acro hardware runs in parallel with the original MSP430 CPU.
Aside from the reserved PMEM space for storing the TCB code, \acro also does not incur any memory overhead. This behavior does not depend on the number of functions or triggers used inside the TCB. 
\begin{center}
\vspace{-2mm}
\textbf{Verification Cost}
\vspace{-2mm}
\end{center}
We verify \acro on an Ubuntu 18.04 machine running at 3.40GHz. Results are also shown in Table~\ref{tab:hw}.
\acro implementation verification requires checking 7 LTL statements. 
The overall verification pipeline (described in Section~\ref{sec:verif_background}) is fast enough to run on a commodity desktop in quasi-real-time.

\begin{table*}[!hbtp]
\footnotesize
\centering
\begin{tabular}{l|cc|c|cccc}
\hline
& \multicolumn{2}{c|}{Hardware} & Reserved & \multicolumn{4}{c}{Verification} \\
& Reg & LUT & PMEM/Flash (bytes) & \# LTL Invariants & Verified Verilog LoC & Time (s) & Mem (MB) \\ \hline\hline
\multicolumn{1}{l|}{OpenMSP430~\cite{openmsp430}}             &  692  & 1813 & 0 & - & - & - & -   \\
\multicolumn{1}{l|}{OpenMSP430 + \acro}   & 725   & 
1855 & 2048 (default) & 7 & 484 & 3.1 & 13.5  \\ \hline
\end{tabular}%
\caption{\acro Hardware overhead and verification costs.}
\label{tab:hw}
\end{table*}

\begin{center}
\vspace{-2mm}
\textbf{Comparison with Prior RoTs}
\vspace{-2mm}
\end{center}
To the best of our knowledge, \acro is the first active RoT targeting this lowest-end class of devices.
Nonetheless, to provide a overhead point of reference and a comparison, we contrast \acro's overhead with 
that of state-of-the-art \underline{passive} RoTs in the same class. We note that the results from~\cite{proactive1,proactive2} can not be compared to \acro quantitatively.
As noted in Section~\ref{sec:app_authWDT}, \cite{proactive1} relies on a standalone additional MCU and~\cite{proactive2} on ARM TrustZone. Both of these are (by themselves) more expensive and sophisticated than the entire MSP430 MCU (and similar low-end MCUs in the same class).
Our quantitative comparison focuses on VRASED~\cite{vrased}, APEX~\cite{apex}, 
and SANCUS~\cite{sancus}: passive RoTs implemented on the same MCU and thus directly comparable (cost-wise). Table~\ref{tab:functions} provides a qualitative comparison between the aforementioned relevant designs. Figure~\ref{fig:comparison} depicts the relative  overhead (in \%) of \acro, VRASED, APEX, and SANCUS with respect to the total hardware cost of the unmodified MSP430 MCU core.

In comparison with prior passive architectures, \acro presents lower hardware overhead. In part, this is due to the fact that it leverages interrupt hardware support already present in the underlying MCU to implement its triggers.
SANCUS presents substantially higher cost as it implements task isolation and a cryptographic hash engine (for the purpose of verifying software integrity) in hardware.
VRASED presents slightly higher cost than \acro. It also necessitates some properties that are similar to \acro's (e.g., access control to particular memory segments and atomicity of its attestation implementation). In addition, VRASED also requires hardware support for an exclusive stack in DMEM.
APEX hardware is a super-set of VRASED's, providing an additional proof of execution function in hardware. As such it requires strictly more hardware support, presenting slightly higher cost.
\acro also reserves approximately $3.1\%$ ($2$ KBytes) of the MCU-s 16-bit address space for storing the TCB code. This value is freely configurable, and chosen as a sensible default to support our envisioned RoT tasks (including sample applications in Section~\ref{sec:applications}). \acro-enabled MCUs manufactured for different use-cases could increase or decrease this amount accordingly.

\begin{table}
\resizebox{\linewidth}{!}{%
\begin{tabular}{||l||l|l|l|l||}
\hline
{\bf Architecture} & {\bf Behavior} & {\bf Service}                                   & {\bf HW Support}    & {\bf Verified?} \\ \hline
VRASED~\cite{vrased}                                          & Passive  & Attestation                        & RTL Design    & Yes                \\ \hline
SANCUS~\cite{sancus}                                          & Passive  & \begin{tabular}[c]{@{}l@{}}Attestation \&\\  Isolation\end{tabular}   & RTL Design    & No                 \\ \hline
APEX~\cite{apex}                                            & Passive  & \begin{tabular}[c]{@{}l@{}} Attestation \& \\ Proof of Execution \end{tabular} & RTL Design    & Yes                \\ \hline
Cider~\cite{proactive1}    & Active   & Timer-based trigger                   & Additional MCU & No                 \\ \hline
Lazarus\cite{proactive2} & Active   & Timer-based trigger                   & ARM TrustZone & No                 \\ \hline
{\bf \acro (this paper)}               & Active   & IRQ-based trigger            & RTL Design    & Yes                \\ \hline
\end{tabular}
}
\vspace{-1em}
\caption{Qualitative Comparison}\label{tab:functions}
\vspace{-1em}
\end{table}
%
%
%
\begin{figure}[b]
\pgfplotstableread{
0  2.3        4.8
1  6.7        5.3
2 79.7      115.5
3 16.7        6.4
}\dataset
\begin{tikzpicture}
\scriptsize
\begin{axis}[ybar,
bar width=.5cm,	
width=8.6cm,	
height=6cm,	
ymin=0,	
ymax=142,	
ymajorgrids=true,       	
ylabel={Percentage increase from base},
xtick=data,
xticklabels = {
\strut \acro,
\strut VRASED,
\strut SANCUS,
\strut APEX
},
major x tick style = {opacity=0},
minor x tick num = 1,
minor tick length=2ex,
every node near coord/.append style={
anchor=west,
rotate=90
},
legend entries={LUT-s, Registers, RAM},
legend columns=3,
legend style={draw=none,nodes={inner sep=3pt},anchor=north west, at={(0.02,0.98)}},
]
\addplot[draw=black,postaction={
pattern=dots
}, fill=blue!20, nodes near coords] table[x index=0,y index=1] \dataset;
\addplot[draw=black, postaction={
pattern=north east lines
}, fill=blue!40, nodes near coords] table[x index=0,y index=2] \dataset;
\end{axis}
\end{tikzpicture}
\vspace{-1em}
\caption{Comparison with passive RoTs: Hardware overhead}
\vspace{-1mm}
\label{fig:comparison}
\end{figure}
\section{Related Work}\label{sec:rw}
Aside from closely related work in~\cite{proactive1} and~\cite{proactive2} (already discussed in 
Section~\ref{sec:app_authWDT}), several efforts yielded \emph{passive} RoT designs for 
resource-constrained low-end devices, along with formal specifications, formal verification and provable security.

Low-end RoT-s fall into three general categories: software-based, 
hardware-based, or hybrid. Establishment of software-based RoT-s~\cite{KeJa03, SPD+04, SLS+05, SLP08, GGR09, LMP11,gligor} 
relies on strong assumptions about precise timing and constant communication delays, which can be unrealistic in the IoT ecosystem.
However, software-based RoTs are the only viable choice for legacy devices that have no security-relevant hardware support.
Hardware-based methods~\cite{PFM+04, TPM, KKW+12, SWP08, MPA08, MQY10,sancus} rely on security 
provided by dedicated hardware components (e.g., TPM~\cite{TPM} or ARM TrustZone~\cite{trustzone}). However, 
the cost of such hardware is normally prohibitive for lower-end IoT devices. Hybrid RoTs~\cite{smart,apex,vrased,tytan,trustlite} 
aim to achieve security equivalent to hardware-based mechanisms, yet with lower hardware cost. 
They leverage minimal hardware support while relying on software to reduce the complexity of additional hardware.

In terms of functionality, such embedded RoTs are passive. Upon receiving a request from an external trusted {\it Verifier}, 
they can generate unforgeable proofs for the state of the MCU or that certain actions were performed by the MCU.
Security services implemented by passive RoTs include: (1) memory integrity verification, i.e., remote attestation 
\cite{smart,sancus,vrased,simple,tytan,trustlite}; (2) verification of runtime properties, including control-flow and data-flow 
attestation~\cite{MPA08,apex,litehax,cflat,lofat,atrium,oat,tinycfa,geden2019hardware}; as well as (3) 
proofs of remote software updates, memory erasure, and system-wide resets~\cite{pure,verify_and_revive,asokan2018assured}.
As discussed in Section~\ref{sec:intro} and demonstrated in Section~\ref{sec:applications}, several application 
domains and use-cases could greatly benefit from more active RoT-s. Therefore, the key motivation for \acro 
is to not only provide proofs that actions have been performed (if indeed they were), but also to assure that 
these actions will necessarily occur.

Formalization and formal verification of RoTs for MCU-s is a topic that has recently attracted lots of attention due 
to the benefits discussed in Sections~\ref{sec:intro} and~\ref{sec:verif_background}. VRASED~\cite{vrased} 
implemented the first formally verified hybrid remote attestation scheme. APEX~\cite{apex} builds atop 
VRASED to implement and formally verify an architecture that enables proofs of remote execution of 
attested software. PURE~\cite{pure} implements provably secure services for software updates, 
memory erasure, and system-wide resets atop VRASED's RoT. Another recent result~\cite{busi2020provably} 
formalized, and proved security of, a hardware-assisted mechanism to prevent leakage of secrets through 
time-based side-channel that can be abused by malware in control of the MCU interrupts. Inline with 
aforementioned work, \acro also formalizes its assumptions along with its goals and implements the 
first formally verified active RoT design.
\section{Conclusions}\label{sec:conclusion}
This paper motivated and illustrated the design of \acro: an active RoT targeting low-end MCU-s used as platforms for 
embedded/IoT/CPS devices that perform safety-critical sensing and actuation tasks. We believe that \acro is the first 
clean-slate design of a active RoT and the first one applicable to lowest-end MCU-s, which cannot host 
more sophisticated security components, such as ARM TrustZone, Intel SGX or TPM-s.
We believe that this work is also the first formal treatment of the matter and the first active RoT to support
a wide range of RoT \trigger-s.

\small
\bibliographystyle{plain}

\bibliography{references}

\end{document}